\title{Performance of a greedy algorithm for edge covering by cliques in interval graphs}
\author{
Gabrio Caimi, Martin Fuchsberger \\
Institute for Operations Research, ETH Zurich \\
CH-8092 Z\"urich
\and
Holger Flier, Marc Nunkesser \\
Institute of Theoretical Computer Science, ETH Zurich \\
CH-8092 Z\"urich
}
\newtheorem{theorem}{Theorem}[section]
\newtheorem{definition}[theorem]{Definition}
\newtheorem{corollary}[theorem]{Corollary}
\newtheorem{property}[theorem]{Property}
\begin{document}

\maketitle


\begin{abstract}
In this paper a greedy algorithm to detect conflict cliques in interval graphs and circular-arc graphs is analysed.
In a graph, a stable set requires that at most one vertex is chosen for each edge. It is equivalent to requiring that at most one vertex for each maximal clique is chosen.
We show that this algorithm finds all maximal cliques for interval graphs, i.e.~it can compute the convex hull of the stable set polytope.
In case of circular-arc graphs, the algorithm is not able to detect all maximal cliques, yet remaining correct.
This problem occurs in the context of railway scheduling.
A train requests the allocation of a railway infrastructure resource for a specific time interval. As one is looking for conflict-free train schedules, the used resource allocation intervals in a schedule must not overlap. 
The conflict-free choices of used intervals for each resource correspond to stable sets in the interval graph associated to the allocation time intervals.
\end{abstract}

\section{Introduction} \label{introduction}

\subsection{Motivation} \label{motivation}
One important aspect of railway optimization is scheduling trains in a main station area \cite{HuismanKroon_Survey_ORinRailways05,Survey06_Caprara}. 
In such an area the track topology is complex, and many different routes to travel between the area entrances (called portals) and platforms exist. Here, an appropriate assignment of exactly one of these routes to each train is crucial.
The choice of these routes has to be conflict-free, i.e.~each infrastructure resource can be occupied by at most one train at the same time. 
Hence, the allocation time intervals at each resource belonging to the chosen train routes can not overlap (\cite{Pachl}).
These restrictions over the concurrent allocation time intervals at each resource transforms into finding stable sets in corresponding interval graphs.
The stable set problem itself can be modeled by formulating an integer linear program (ILP) and solving it using a commercial solver, once an objective has been formulated. 
Since a good ILP formulation has a strong relaxation which speeds up the solution process, we look for an efficient ILP description of the stable set problem in an interval graph. 

\subsection{The basic idea}

Let consider the track topology, consisting in a set of resources, as described in \cite{montigel92:DoubleVertexGraph}. Each route/time assignment for a train allocates the used resources over certain time intervals, depending on signal positions, train dynamics, track topology, and additional safety regulations.
A feasible train schedule assigns a route and times to each train such that no resource is allocated at the same time, thereby guaranteeing a conflict-free schedule. 

Several approaches exist in the literature for modeling conflicting allocations over resources.
The conflict modeling in the conflict graph approach of \cite{kroon:trainroutingalg}
is simple: each available route/time assignment of a train corresponds to a node in the conflict graph. 
Each time two nodes of distinct trains would allocate a resource at the same time, a conflict edge between these nodes is introduced into the graph. Additionally, the fact that a train is only allowed to have one route/time assignment is modeled by interconnecting all nodes of the same train by edges thereby forming a clique. 
A solution of the train routing problem in this model is a stable set with cardinality equal to the number of trains $n = |T|$, where each chosen vertex assigns a route/time to the corresponding train. As it is a stable set of cardinality $n$, this assignment is guaranteed to be conflict-free and each train gets exactly one route/time assigned.
We can illustrate this conflict modeling concept by drawing all the allocation time intervals of a single resource during a period of time (see later Figure \ref{fig:allocationschema}).
This leads to the following ILP formulation for all the stable sets (feasible solutions) of the conflict graph $G$, where the variable $x_{ij}$ correspond to the node $ij \in G$, meaning the route/time assignment $j$ for train $i$: 


\begin{eqnarray}
\sum_{j=1}^{m(i)}x_{ij} = 1 & \textrm{for all } i=1,\ldots, n \label{traincliques} \\
x_{ik} + x_{jl} \le 1& \textrm{for all } r_{i_k}\nleftrightarrow r_{j_l} \label{pair_conflicts}\\
x_{ij} \in\{0,1\} \label{integrality}
\end{eqnarray}
where $r_{i_k}\nleftrightarrow r_{j_l}$ means that assignment $k$ of train $i$ is in conflict with assignment $l$ of train $j$. 

Let $$STAB1(G):=\{x | x \geq 0, x \textrm{ satisfies } (\ref{traincliques}), (\ref{pair_conflicts}) \}$$ be the polytope of the non-negative vectors $x$ satisfying (\ref{traincliques}) and (\ref{pair_conflicts}).
It is easy to see that integral solutions fulfilling (\ref{traincliques}), (\ref{pair_conflicts}), and (\ref{integrality}) are exactly the incidence vectors of stable sets of nodes of $G$.

Let denote with $$STAB(G):= \textrm{conv}\{x^S \in \{0,1\}^V | S \subseteq V \textrm{ is stable set} \}$$ the convex hull of the incidence vectors of all stable sets of nodes of G. 
Clearly, $STAB(G) \subseteq STAB1(G)$ and the integer points contained in both sets are the same.
The two polytopes are equal if and only if the graph $G$ is bipartite and has no isolated nodes \cite{GroetschelCombopt}. This is in general not the case when looking at train scheduling problems. Hence, we have to improve on the $STAB1(G)$ formulation.

\subsection{Gathering conflicts}

In the following we restrict our view to a single resource of the track topology, consisting in a set of topology elements \cite{montigel92:DoubleVertexGraph}.
Instead of looking at pairs of overlapping allocation time intervals like in the conflict graph approach, \cite{Thesis_Fuchsberger07,RTCG_paper07} introduces a more sophisticated attempt which gather them into groups of conflicting intervals.
All possible assignments using the resource at the same time are grouped in a conflict clique, and at most one of these can use the resource (see later Figure \ref{fig:resourceconflicts}).


Let the graph $G=(V,E)$ denote the induced interval graph from the allocation time intervals for this resource, which is built as follows: for each allocation time interval a vertex $v \in V$ is created and two vertices are connected with an edge $e \in E$ if the two intervals intersect.
On the other hand, in periodic scheduling we deal with circular time axes and the induced intersection graph $G$ is called \emph{circular-arc graph}.
The set of allocation time intervals is called \emph{arc model}.

A feasible solution of the train scheduling problem should be conflict-free in each resource. It means that the set of chosen assignments represent a stable set in the interval graph of each resource. 
The conflict graph formulation describes the set of all stable sets in the graph $G$ using the incidence vectors and (integer) linear constraints avoiding the choice of two adjacent nodes.
Ideally, one would be able to describe efficiently the convex hull of the incidence vectors of all stable sets in $G$ as a polytope. Since the polytope is naturally integer it induces an efficiently solvable ILP formulation. 
Chv\'atal \cite{Chvatal75} proved that one can describe efficiently this convex hull. 
This description imposes that the amount of stable set vertices over each maximal clique of the interval graph should be at most one.

Section~\ref{greedy} explains the greedy algorithm to detect the conflict cliques and in Section~\ref{proof} 
we prove that this algorithm finds all maximal cliques for interval graphs (i.e.~describes the convex hull).
\section{A greedy algorithm for grouping conflicts into cliques} \label{greedy}
In this section we explain the algorithm introduced by \cite{Thesis_Fuchsberger07,RTCG_paper07} for detecting conflicts and grouping them into cliques.

\subsection{Allocation Schema}
For an easier understanding of the algorithm we first introduce a graphical representation of resource allocations called Allocation Schema (AS).
We create for each resource an AS, which is a representation of the resource allocation intervals by the potential assignments over time. We chart the allocations of the different assignments by a horizontal line with the start of the allocation as the offset and the allocation duration as the length of the line. An example of AS is given in Figure~\ref{fig:allocationschema}. 
\begin{figure}[th]
\centering
\includegraphics[width=0.6 \textwidth]{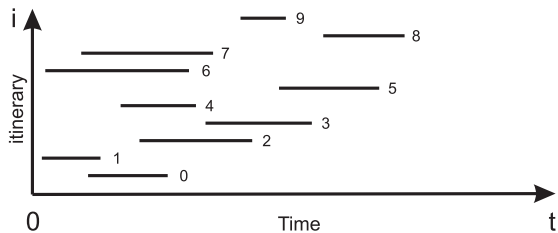}
\caption[Allocation Schema (AS)]{In the Allocation Schema (AS) the allocations of a resource by train assignments are represented by horizontal lines.}
\label{fig:allocationschema}
\end{figure}

\subsection{Algorithm for finding conflict cliques}
To find all conflict cliques we suggest an algorithm which operates on the AS and not on the intersection graph directly. 
Algorithm~\ref{alg:conflictcliques} basically consists of two major steps:
\begin{enumerate}
\item Sort the start and end times of the allocation time intervals according to time.
\item Walk through the times, keep track of the currently open intervals in a list and construct the set of conflict cliques among the currently open intervals at the end times.
\end{enumerate}
Note that we do not have to form a clique at the end time of each interval, but only at end times where a new interval has been opened since the last iteration.
Additionally, we avoid adding a clique if it just consists of one interval since it would not be a necessary constraint.
A result of Algorithm~\ref{alg:conflictcliques} is illustrated in Figure~\ref{fig:resourceconflicts}.
\algsetup{indent=2em}
\begin{algorithm}[th]
\caption{Conflict Cliques}\label{alg:conflictcliques}
\begin{algorithmic}[1]
\REQUIRE Set of all allocation time intervals $\bigl\{[s(i),f(i)]\bigr\}_{i=1,\ldots,n}$ of a resource $r$
\ENSURE Set $C^r$ of conflict cliques
\STATE Create a list $L$ of $2n$ tupels (\textit{time} $\in \mathbb{R}^{+}$, \textit{identifier} $\in \{1,\ldots,n\}$, boolean \textit{is\_endtime}) 
\STATE Sort $L$ according to the first key \textit{time}, for same time starttimes before endtimes 
\STATE Initialize list of currently open intervals $O := \emptyset$
\STATE Initialize $\textit{new\_starttime} := FALSE$
\FOR {$i = 1$ to $2n$}
	\IF {$\textit{is\_endtime}^i = FALSE$}
		\STATE $O := O \cup \{\textit{identifier}^i\}$
		\STATE $\textit{new\_starttime} := TRUE$ 
	\ELSE
		\IF {$(\textit{new\_starttime} = TRUE)\, \&\& \,(|O| > 1)$}
			\STATE  $C^r := C^r \cup O$
			\STATE $\textit{new\_starttime} := FALSE$
		\ENDIF
		\STATE $O = O \setminus \{\textit{identifier}^i\}$
	\ENDIF
\ENDFOR
\RETURN $C^r$
\end{algorithmic}
\end{algorithm}
\begin{figure}[th]
\centering
\includegraphics[width=0.6 \textwidth]{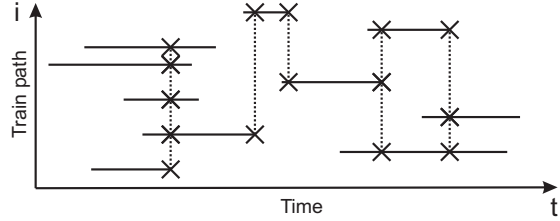}
\caption[Concurrent resource occupation]{Several train itineraries allocate a resource during different time intervals (horizontal bars). The dashed vertical bars illustrate conflict sets among the allocation intervals formed by Algorithm~\ref{alg:conflictcliques}.}
\label{fig:resourceconflicts}
\end{figure}

\subsection{Complexity and remarks}
Algorithm~\ref{alg:conflictcliques} runs in $O(n \log n)$ where $n$ is the number of allocation intervals. In a first step we have to sort the allocation intervals according to time 
which takes $O(n \log n)$ time \cite{Ottmann_AlgoDatenstrukturen}. We then loop $2n$ times where the steps inside  the loop take $O(1)$ time which results in a running time complexity of $O(n)$ for the loop and $O(n \log n)$ for the whole algorithm.

For non-periodic railway scheduling problems the described algorithm is optimal, i.e.~it finds all maximal cliques in the corresponding interval graph and thereby also the convex hull of the stable set incidence vectors. The proof follows in Section~\ref{proof}. 
In periodic railway scheduling problems, the allocations of a resource reoccur after a fixed period length and hence allocation intervals may overlap the period length. 
To deal with these overlaps of the period length, we can use the modulo function to draw overlapping intervals by two horizontal lines. The first line starts at time $0$ and ends at the original end time of the allocation modulo the period length. The second line starts at the original start time and ends at the period length. 

There are two drawbacks in this periodic setting: the first is the unknown quality of the polytope described by the maximal clique.
For instance, in the chordless odd cycles $K_{2k+1}$ for $k\geq 2$ the non-integer solution where all vertices have assigned the value $\frac{1}{2}$ is an extremal point of the polytope described by the maximal cliques, making it in general not the convex hull for the incidence vectors of a stable set.
It is unclear how one can find efficiently the convex hull of the stable set incidence vectors in the periodic setting and how good the maximal clique description approximates this convex hull. 

The second problem is, that the current greedy algorithm fails to detect all maximal conflict cliques. For instance, in the example of Figure \ref{fig:cmproblem} the described method fails to detect a conflict clique:
Typically, in practice the length of a period is one hour and the allocation time intervals are not longer than 3 minutes. Hence these special cases where a conflict clique is not detected by the algorithm most likely will not occur. 

\begin{figure*}[th]
	\centering
	\includegraphics[scale=0.4]{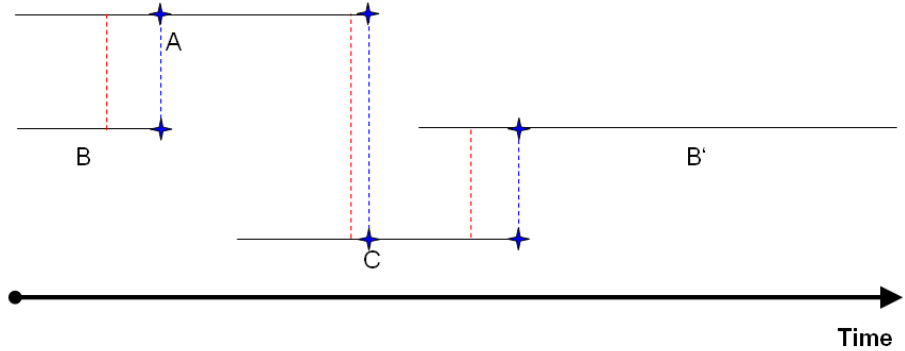}
	\caption[Finding Cliques: Special case]{Special case, where the three clique consisting of the intervals A,B, and C is not detected}
	\label{fig:cmproblem}
\end{figure*}

\section{Optimality proof for the non-periodic case} \label{proof}

In this section we present a proof that the greedy algorithm is optimal in case of non-periodic scheduling.

\begin{definition}
For a time interval $i$ of a resource $r$, denote its starting and finishing time by $s(i)$  and $f(i)$, respectively. An interval $i$ is called \emph{active at time} $t$ iff $s(i) \leq t \leq f(i)$.
\end{definition}

\begin{property}\label{prop:nonperiodic}
A subset $J_C$ of all allocation intervals $I^r$ of a resource $r$ forms a clique $C_J$ of the associated interval graph $G^r$, iff there exists a time segment $$\displaystyle T_C = \left[ s(J_C) := \max_{i \in J_C} \left\{ s(i) \right\},f(J_C) := \min_{i \in J_C} \left\{ f(i) \right\} \right] $$ during which all intervals $i \in J_C$ are active. Furthermore, for any $t < s(J_C)$ or $t > f(J_C)$, at least one of the intervals $i \in J_C$ is not active.
\end{property}

\begin{theorem} Algorithm~\ref{alg:conflictcliques} computes a minimum edge clique cover for the interval graph $G^r$ associated with the allocation intervals $I^r$. Each clique of the edge clique cover is maximal.
\end{theorem}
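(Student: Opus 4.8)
The plan is to split the statement into three parts, proved in order: that the algorithm's output is a valid edge clique cover, that each output clique is maximal, and that the number of output cliques is minimum. Throughout I would use the observation that when the main loop reaches the finish time of some interval $j$, the open list $O$ consists of exactly the intervals active at time $f(j)$, so by Property~\ref{prop:nonperiodic} every such $O$ is a clique, with common active segment $[s^*, f(j)]$ where $s^* = \max_{i \in O} s(i)$ and $f(j) = \min_{i \in O} f(i)$. Let $\tau_1 < \cdots < \tau_p$ denote the finish times at which the algorithm actually outputs a clique, and let $O_1, \ldots, O_p$ be the corresponding cliques. The one structural fact I would establish first, and reuse repeatedly, is a \emph{witness lemma}: because a clique is emitted at $\tau_\ell$ only when \textit{new\_starttime} is \textit{TRUE} and $|O| > 1$, the most recently started interval is still open at $\tau_\ell$; hence $s^*$ equals the time of that last start event, it is attained by some $a_\ell \in O_\ell$, and $s(a_\ell) > \tau_{\ell-1}$ (the interval opened strictly after the previous emission). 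The delicate point is ruling out that this newest interval already closed: this forces a short case analysis showing that if it had closed while $|O|>1$ a clique would have been emitted earlier, and if $|O|=1$ at its closing then no clique could be emitted at $\tau_\ell$ at all.

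For the covering claim I would take any edge $\{u,v\}$ of $G^r$ and set $\tau = \min(f(u), f(v))$, the finish time at which both intervals are still active. If the algorithm emits at $\tau$, that clique contains both $u$ and $v$. Otherwise \textit{new\_starttime} is \textit{FALSE} at $\tau$ (since $|O| \ge 2$ there), which means no interval opened since the previous emission $\tau_{\ell-1}$; consequently $u$ and $v$ were already open at $\tau_{\ell-1}$, so $\{u,v\} \subseteq O_{\ell-1}$. Either way the edge is covered. For maximality, fix $O_\ell$ and suppose some $k \notin O_\ell$ is adjacent to all of $O_\ell$. Since $k$ is not active at $\tau_\ell$ yet must meet the latest-starting member $a_\ell$, I get $s^* \le f(k) < \tau_\ell$. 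But then at time $f(k)$ the open list contains $O_\ell \cup \{k\}$ (so $|O|>1$) and \textit{new\_starttime} is \textit{TRUE} (a start occurred at $s^* \le f(k)$ and, by the witness lemma, no clique was emitted between $s^*$ and $\tau_\ell$), so the algorithm would have emitted a clique strictly before $\tau_\ell$ --- contradicting that $\tau_\ell$ is the next emission. Hence no such $k$ exists and $O_\ell$ is maximal.

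Finally, for minimality I would exhibit $p$ edges that pairwise cannot lie in a common clique, forcing at least $p$ cliques in any cover. Using the witness lemma, take $e_\ell = \{j_\ell, a_\ell\}$, where $j_\ell$ is the interval finishing at $\tau_\ell$ and $a_\ell \in O_\ell$ satisfies $s(a_\ell) > \tau_{\ell-1}$. For $\ell < \ell'$ we have $f(j_\ell) = \tau_\ell \le \tau_{\ell'-1} < s(a_{\ell'})$, so $j_\ell$ finishes before $a_{\ell'}$ starts and the two are non-adjacent; thus no clique contains both $e_\ell$ and $e_{\ell'}$. Any edge clique cover therefore needs a separate clique per $e_\ell$, i.e.\ at least $p$ cliques, and since the algorithm outputs exactly $p$ (pairwise distinct, as $j_\ell \notin O_{\ell'}$ for $\ell' > \ell$) maximal cliques that already cover all edges, its cover is minimum. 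I expect the main obstacle to be the witness lemma of the first paragraph: getting the bookkeeping of the \textit{new\_starttime} flag exactly right --- including ties between equal start and finish times, resolved by the sort order ``starts before ends'' --- is what makes all three subsequent arguments go through.
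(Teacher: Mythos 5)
Your proposal follows essentially the same route as the paper's proof: the same three-part decomposition (cover, maximality, minimality of the number of cliques), and in particular the same minimality certificate --- one conflict per emitted clique, between the earliest-finishing and latest-starting intervals of that clique, with the cross inequality $f(j_\ell) \le \tau_{\ell'-1} < s(a_{\ell'})$ playing exactly the role of the paper's assertion $f(f_i) < s(s_j)$. Where you differ is in explicitness, to your credit: your ``witness lemma'' spells out the \textit{new\_starttime} bookkeeping that the paper leaves implicit (the paper's claim that $a \notin O_C$ forces $f(a) < s(J_C)$, rather than merely $f(a) < f(J_C)$, silently relies on the same earlier-emission argument you make explicit), and you actually prove the covering claim, which the paper dismisses with ``clearly.''

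There is, however, one genuine omission: the degenerate case $j_\ell = a_\ell$, i.e., when the interval finishing at $\tau_\ell$ is itself the latest-starting member of $O_\ell$ (for instance a short interval nested inside a long one: with $A = [0,10]$ and $B = [5,6]$, the emitted clique $\{A,B\}$ has $j_1 = a_1 = B$). In that case your $e_\ell = \{j_\ell, a_\ell\}$ is a singleton, not an edge, so $S$ is not a set of $p$ conflicts as claimed. The paper handles precisely this case by taking, when $s_i = f_i$, an arbitrary but fixed conflict between $s_i = f_i$ and some other interval of $C_i$ (which exists because emission requires $|O| > 1$). Your cross-clique argument survives this patch unchanged, since it only uses that $j_\ell \in e_\ell$ and $a_{\ell'} \in e_{\ell'}$, and in the degenerate case $a_{\ell'} = j_{\ell'}$ still lies in the patched $e_{\ell'}$. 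With that one clause added, your proof is complete and matches the paper's.
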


\begin{proof}
Clearly, $C^r$ computed by Algorithm~\ref{alg:conflictcliques} is an edge clique cover. Furthermore, every clique $C \in C^r$ is maximal. Suppose some $C \in C^r$ was not. Then, there would exist an interval $a \not\in J_C$, conflicting with every interval $b \in J_C$. But because the clique $C$, which was added to $C^r$ in line 11 of Algorithm~\ref{alg:conflictcliques}, is equal to the set $O_C$ of all open (i.e., active) intervals at time $t = f(J_C)$, and $a \not\in O_C$, it must hold that either $s(a) > f(J_C)$ or $f(a) < s(J_C)$, where $s(J_C)=s(\text{identifier}^i)$, the starting time of interval $\text{identifier}^i$ defined in line 7 of the algorithm. By Property~\ref{prop:nonperiodic}, both cases contradict the assumption that $a$ conflicts with every interval $b \in J_C$. \footnote{In the algorithm, a suitable tie break should be chosen, so that for intervals $i$, $j$ with $f(i) = s(j)$, the starting time of $j$ is considered before the finishing time of $i$.}

Finally, to show that $\ell := |C^r|$ is minimum, it suffices to identify a subset $S$ of the conflicts that requires at least $\ell$ cliques to be covered.  This subset can be constructed inductively as follows. Denote by $C_i$, $1 \leq i \leq \ell$ the cliques of $C^r$ in the order in which they are found by the algorithm. For every clique $C_i$, denote by $s_i$, $f_i$  intervals for which $s(C_i)=s(s_i)$ and $f(C_i)=f(f_i)$, respectively. Let $c_i$ be the conflict between $s_i$ and $f_i$, if $s_i \neq f_i$, or an arbitrary but fixed conflict between $s_i=f_i$ and another interval $a \in C_i$.  In either case, for any $i < j$, conflicts $c_i$ and $c_j$ cannot be covered by the same clique, since $f(f_i) < s(s_j)$, and hence, intervals $f_i$ and $s_j$ are not in conflict. Therefore, the set $S=\left\{c_i| 1 \leq i \leq \ell \right\}$ is a subset of conflicts that are pairwise not coverable by the same clique. Hence, any edge clique cover must consist of at least $|S|=\ell$ cliques.
\end{proof}

\begin{corollary}
Algorithm~\ref{alg:conflictcliques} finds all maximal cliques in the interval graph $G^r$.
\end{corollary}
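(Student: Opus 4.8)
The plan is to exploit the two properties already secured by the preceding theorem: that $C^r$ is an edge clique cover of $G^r$, and that \emph{every} clique occurring in $C^r$ is maximal. Together these give the inclusion $C^r \subseteq \mathcal{M}$, where $\mathcal{M}$ denotes the family of all maximal cliques of $G^r$ of size at least two. (Singleton maximal cliques correspond to isolated vertices, cover no edge, and contribute only a redundant constraint $x_v \le 1$; so they are irrelevant to an edge clique cover and I will simply exclude them.) Hence the only thing left is the reverse inclusion $\mathcal{M} \subseteq C^r$: each maximal clique is genuinely produced by the algorithm.

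First I would fix a maximal clique $M \in \mathcal{M}$ and, writing $s(M)=\max_{i\in M}s(i)$ and $f(M)=\min_{i\in M}f(i)$ as in Property~\ref{prop:nonperiodic}, choose intervals $s_M, f_M \in M$ attaining $s(M)=s(s_M)$ and $f(M)=f(f_M)$. The heart of the argument is the claim that $M$ is the \emph{unique} maximal clique containing both $s_M$ and $f_M$. To prove it, let $M'$ be any maximal clique with $s_M, f_M \in M'$. Then $s_M \in M'$ forces $s(M') \ge s(s_M) = s(M)$ and $f_M \in M'$ forces $f(M') \le f(f_M) = f(M)$, so the time segment $T_{M'}=[s(M'),f(M')]$ is a nonempty subsegment of $T_M=[s(M),f(M)]$. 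Picking any $q \in T_{M'} \subseteq T_M$, the set $A_q$ of all intervals active at $q$ is pairwise-intersecting, hence a clique, and by Property~\ref{prop:nonperiodic} it contains all of $M$ (active throughout $T_M \ni q$) and all of $M'$ (active throughout $T_{M'} \ni q$). Maximality of both then yields $M = A_q = M'$.

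With the claim in hand I would finish by exhibiting a single edge that only $M$ can cover. If $s_M \ne f_M$, take $e = \{s_M, f_M\}$, which is an edge because $s_M$ and $f_M$ conflict; if $s_M = f_M$, pick any other $j \in M$ and take $e = \{s_M, j\}$. In either case the clique $C \in C^r$ that covers $e$ is maximal by the theorem and contains both $s_M$ and $f_M$, so the claim forces $C = M$ and hence $M \in C^r$. This establishes $\mathcal{M} \subseteq C^r$, and combined with $C^r \subseteq \mathcal{M}$ gives $C^r = \mathcal{M}$, i.e.~Algorithm~\ref{alg:conflictcliques} outputs exactly the maximal cliques.

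I expect the only real obstacle to be the degenerate bookkeeping inside the uniqueness claim. Two points need care: that the set of intervals active at the chosen point $q$ equals $M$ (resp.\ $M'$) \emph{even when $q$ is an endpoint} of $T_{M'}$ --- here maximality rules out any extra interval sharing that point, so the closed-interval convention of Property~\ref{prop:nonperiodic} causes no trouble --- and the case $s_M = f_M$, in which a single ``tight'' interval already pins down $M$. Once these are dispatched the uniqueness claim does all the work, and notably the corollary follows from the edge-cover and maximality properties alone, without ever invoking the minimality of $C^r$.
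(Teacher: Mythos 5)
Your proof is correct, but it takes a genuinely different route from the paper's. The paper argues by contradiction using the temporal structure of the algorithm's output: a maximal clique $M \notin C^r$ would have to be split across two \emph{consecutive} output cliques $C_i, C_{i+1}$, and the interval $a$ witnessing $M \not\subseteq C_i$ then forces $M \subseteq C_{i+1}$ (all intervals open at $s(a)$ end up in $C_{i+1}$), a contradiction. Note that the paper's decomposition step --- that $M$ must be a union of nonempty subsets of exactly two consecutive cliques --- is asserted rather than justified, so its proof leans on the reader's intuition about the sweep. Your argument, by contrast, never looks at the algorithm's execution: it uses only the two conclusions of the theorem ($C^r$ is an edge clique cover, and every member of $C^r$ is maximal) together with a Helly-type uniqueness lemma, namely that the extreme intervals $s_M, f_M$ of a maximal clique $M$ lie in no other maximal clique, because any point $q$ of the (nonempty, nested) segment $T_{M'}$ yields a clique $A_q$ of intervals active at $q$ containing both $M$ and $M'$, whence $M = A_q = M'$ by maximality. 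This gives each maximal clique a ``private'' edge that only it can cover, forcing $\mathcal{M} \subseteq C^r$. What your approach buys: it is modular (it would certify \emph{any} algorithm with the theorem's two guarantees), it makes explicit where the interval structure enters --- the uniqueness lemma is false in general graphs, e.g.~the octahedron $K_{2,2,2}$ has an edge clique cover by four of its eight maximal triangles --- and it handles the singleton caveat honestly, since the corollary as literally stated fails for isolated vertices, which the algorithm deliberately skips via the $|O|>1$ test. What the paper's approach buys is brevity and a picture of \emph{why} the sweep cannot miss a clique, at the cost of the unproven two-consecutive-cliques assertion.
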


\begin{proof}
Suppose there exists a maximal clique $M \not\in C^r$. Clearly, $M$ can neither be a superset nor a subset of any $C \in C^r$. Hence, $M$ can only contain subsets of intervals of two consecutive cliques $C_i, C_{i+1} \in C^r$, i.e., $M=C'_i \cup C'_{i+1}$, with nonempty subsets $C'_i \subset C_i$ and $C'_{i+1} \subset C_{i+1}$. Now, $M \not\subset C_i$ can only hold if there is an interval $a$ in $C'_{i+1}$ that is not in $C'_i$. $M$ is only a clique, if $a$ is in conflict with every $b \in C'_i$. But if this is the case, then $M \subset C_{i+1}$, since all intervals that are open at $s(a)$ are, by Algorithm~\ref{alg:conflictcliques}, in $C_{i+1}$.
\end{proof}

\bibliographystyle{plain}
\bibliography{/ifor/u/caimig/Railroad/Bibliography/literature}

\end{document}